\newcommand\ie{{\em i.e.}}
\def\B{\mathscr B}
\def\d{\mathrm{d}}
\def\F{\mathscr F}
\def\G{\mathcal G}
\def\H{\mathcal H}
\def\K{\mathscr K}
\def\N{\mathbb N}
\def\R{\mathbb R}
\def\S{\mathbb S}
\def\SS{\mathscr S}
\def\Hrond{\mathscr H}
\def\HS{\mathfrak h}
\def\Pv{\mathrm{Pv}}
\def\lone{\mathsf{L}^{\:\!\!1}}
\def\ltwo{\mathsf{L}^{\:\!\!2}}
\def\linf{\mathsf{L}^{\:\!\!\infty}}
\def\e{\mathop{\mathrm{e}}\nolimits}
\def\slim{\mathop{\hbox{\rm s-}\lim}\nolimits}
\newtheorem{Theorem}{Theorem}[section]
\newtheorem{Lemma}[Theorem]{Lemma}
\begin{document}


\title{Explicit formulas for the Schr\"odinger wave operators in $\R^2$}

\author{S. Richard$^1$~~and R. Tiedra de
Aldecoa$^2$\footnote{Supported by the Chilean Fondecyt Grant 1090008
and by the Iniciativa Cientifica Milenio ICM P07-027-F ``Mathematical Theory of
Quantum and Classical Magnetic Systems'' from the Chilean Ministry of Economy.}}

\date{\small}
\maketitle \vspace{-1cm}

\begin{quote}
\emph{
\begin{itemize}
\item[$^1$] Universit\'e de Lyon; Universit\'e
Lyon 1; CNRS, UMR5208, Institut Camille Jordan, \\
43 blvd du 11 novembre 1918, F-69622
Villeurbanne-Cedex, France.
\item[$^2$] Facultad de Matem\'aticas, Pontificia Universidad Cat\'olica de Chile,\\
Av. Vicu\~na Mackenna 4860, Santiago, Chile
\item[] \emph{E-mails:} richard@math.univ-lyon1.fr, rtiedra@mat.puc.cl
\end{itemize}
}
\end{quote}


\begin{abstract}
In this note, we derive explicit formulas for the Schr\"odinger wave operators in
$\R^2$ under the assumption that $0$-energy is neither an eigenvalue nor a
resonance. These formulas justify the use of a recently introduced topological
approach of scattering theory to obtain index theorems.
\end{abstract}

\textbf{2010 Mathematics Subject Classification:} 81U05, 35P25, 35J10.

\smallskip

\textbf{Keywords:} Wave operators, Schr\"odinger operators, Levinson's theorem.

\section{Introduction and main theorem}
\setcounter{equation}{0}

It has recently been shown that introducing $C^*$-algebraic methods in scattering
theory leads naturally to some index theorems. The starting point for this approach was
the observation made in \cite{KR06} (see also \cite{BS12,IR12,KR08,KR12,RT10}) that
Levinson's theorem can be reinterpreted as an index theorem.
In its original form, Levinson's theorem establishes roughly the equality between the number of bound states of a Schr\"odinger operator and an expression involving the scattering operator for the underlying physical system.
The main idea of the new approach consists in showing that the corresponding wave operators $W_\pm$ belong to a suitable $C^*$-algebra, and then in applying technics of non-commutative topology and cyclic cohomology to obtain an index theorem.
For more complex scattering systems, other topological equalities involving index theorems for families
as well as higher degree traces can also be derived (see \cite{KPR}).

To apply the new approach, a rather good understanding of the operators $W_\pm$ is necessary. Indeed, these partial isometries (which are also Fredholm operators under rather weak assumptions) have to be
affiliated to the central $C^*$-algebra of a short exact sequence, with the algebra of compact operators as an ideal and an understandable quotient algebra. For that purpose, explicit formulas for $W_\pm$ have been exhibited for various models of quantum mechanics (see for example \cite{PR11,RT13}). In the present note, we add to this list an explicit formula for the Schr\"odinger wave operators in $\R^2$ in the generic case. We recall that the $2$-dimensional case presents various difficulties and deserves a special attention; see the seminal works \cite{BGD88,JY02,Yaj99} and references therein, or the more recent papers \cite{EG12_0,EG12_1,KMRT05,KMRT09,Sch05,W11}.

So, let us be more precise about our result. We consider in the Hilbert space $\H:=\ltwo(\R^2)$ the free Schr\"odinger operator $H_0:=-\Delta$ and the perturbed operator $H:=-\Delta+V$, with a potential $V\in\linf(\R^2;\R)$ decaying fast enough at infinity. In such a situation,
the wave operators
\begin{equation}\label{wave}
W_\pm:=\slim_{t\to\pm\infty}\e^{itH}\e^{-itH_0}
\end{equation}
exist and are asymptotically complete. As a consequence, the scattering operator $S:=W_+^* W_-$ is a unitary operator in $\H$.
If $\B(\H)$ (resp. $\K(\H)$) denotes the set of bounded (resp.
compact) operators in $\H$, and if $A$ stands for the generator of dilations in $\R^2$, then our
main result is the following\;\!:

\begin{Theorem}\label{Java}
Suppose that $V$ satisfies $|V(x)|\le{\rm Const.}\;\!(1+|x|)^{-\sigma}$ with
$\sigma>11$ for almost every $x\in\R^2$, and assume that $H$ has neither eigenvalues
nor resonances at $0$-energy. Then, one has in $\B(\H)$ the equalities
\begin{equation}\label{jolieformule}
W_-=1+R(A)(S-1)+K
\qquad\hbox{and}\qquad
W_+=1+\big(1-R(A)\big)(S^*-1)+K',
\end{equation}
with $R(A):=\frac12\big(1+\tanh(\pi A/2)\big)$ and $K,K'\in\K(\H)$.
\end{Theorem}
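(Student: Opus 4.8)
The plan is to compute the wave operators via stationary scattering theory, reducing everything to the study of the operator $W_-$ (since $W_+$ follows from an analogous computation, or by a symmetry argument relating $W_+$ to the adjoint structure). The key structural fact I would rely on is the stationary representation of the wave operator, which expresses $W_-$ in terms of the boundary values of the resolvent $(H-\lambda\mp i0)^{-1}$ and the spectral decomposition of $H_0$. Concretely, in the spectral representation of $H_0=-\Delta$, the free operator is diagonalized by the map sending $f$ to the family $\{\hat f(\sqrt\lambda\,\omega)\}$ indexed by energy $\lambda>0$ and angle $\omega\in\S^1$. In this representation the wave operator acts fiberwise, and the deviation $W_--1$ is governed by the on-shell scattering amplitude together with the off-shell structure of the resolvent.

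The core of the argument is to show that modulo compact operators, the leading singular contribution to $W_--1$ factors through the generator of dilations $A$. First I would write $W_--1 = -2\pi i\,\delta(H_0-\lambda)\,T(\lambda)$-type expressions schematically, and then carefully extract the low-energy behavior, which in $\R^2$ is where the delicate logarithmic singularities live — this is exactly why the hypotheses $\sigma>11$ and the absence of $0$-energy eigenvalue or resonance are imposed. The absence of resonance guarantees that the on-shell $T$-matrix has a controlled (non-singular) limit as $\lambda\to 0$, so that the universal function $R(A)=\tfrac12(1+\tanh(\pi A/2))$ emerges from the asymptotics of a specific kernel in the dilation variable. The function $R$ arises because the interpolation between the high-energy behavior ($A\to+\infty$, where $R\to 1$) and the low-energy behavior ($A\to-\infty$, where $R\to 0$) of the relevant multiplier is precisely $\tfrac12(1+\tanh(\pi A/2))$; this is the signature of a Mellin-transform / scaling computation. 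I would therefore change variables to the Mellin representation, in which $A$ becomes multiplication by the Mellin variable, and identify the explicit multiplier as the hyperbolic-tangent profile above.

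The decomposition into $1+R(A)(S-1)+K$ then comes from separating the on-shell part — which gives exactly $R(A)(S-1)$ once the scattering operator $S$ is recognized as the fiberwise action of the $S$-matrix $S(\lambda)=1-2\pi i\,T(\lambda)$ — from the off-shell remainder, which I must show is compact. For $W_+$ the same computation with $-i0$ replaced by $+i0$ (equivalently reversing the sign of $t$) produces the complementary multiplier $1-R(A)$ acting on $S^*-1$. The main obstacle I expect is proving the compactness of the remainder term $K$ and the uniform control of error terms near zero energy: in two dimensions the resolvent expansion involves $\log\lambda$ singularities, and showing that all of these, once multiplied against the decaying potential and integrated against the smooth cutoffs, yield genuinely compact operators requires sharp mapping estimates between weighted $\ltwo$ spaces and is precisely where the strong decay assumption $\sigma>11$ is consumed. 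Establishing that the boundary-value resolvent differences define compact operators, and that the singular on-shell contribution is captured exactly (not just up to compacts) by the clean multiplier $R(A)$, is the technical heart of the proof.
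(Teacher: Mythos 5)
Your proposal follows essentially the same route as the paper: the stationary representation of $W_--1$ in the spectral representation $\F_0$ of $H_0$, a Mellin/dilation-variable computation identifying the singular kernel as the hyperbolic-tangent multiplier $\vartheta(A_+)=\frac12\big(1-\tanh(\pi A_+)\big)$ (which becomes $R(A)$ after conjugation by $\F_0^*$), control of the on-shell amplitude $vM_0(\lambda+i0)^{-1}v$ down to $\lambda=0$ via the Jensen--Nenciu resolvent expansion (which is exactly where $\sigma>11$ and the absence of $0$-energy eigenvalues and resonances are consumed), and finally the separation of $R(A)(S-1)$ from a compact commutator-type remainder. The only minor difference is that the paper deduces the $W_+$ formula algebraically from $W_+=W_-S^*$ rather than by repeating the stationary computation with the opposite boundary value, but both routes work.
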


We stress that the absence of eigenvalues or resonances at $0$-energy is generic.
Their presence leads to slightly more complicated expressions and will be considered
elsewhere. On the other hand, we note that no spherical symmetry is imposed on $V$.
The rest of the text is devoted to the proof of formulas \eqref{jolieformule} as well
as another formula for $W_\pm$ which does not involve any compact remainder (see
Theorem \ref{BigMama}).\\

\noindent
{\bf Notations\hspace{1pt}:}
$\N:=\{0,1,2,\ldots\}$ is the set of natural numbers, $\R_+:=(0,\infty)$,
$\langle x\rangle :=\sqrt{1+|x|^2}$ and $\SS$ is the Schwartz space on $\R^2$. The
sets $\H^s_t$ are the weighted Sobolev spaces over $\R^2$ with index $s\in\R$
associated with derivatives and index $t\in\R$ associated with decay at infinity
\cite[Sec.~4.1]{ABG} (with the convention that $\H^s:=\H^s_0$ and $\H_t:=\H^0_t$).
For any $s,t\in\R$, the $2$-dimensional Fourier transform $\F$ is a topological
isomorphism of $\H^s_t$ onto $\H^t_s$, and the scalar product
$\langle\;\!\cdot\;\!,\;\!\cdot\;\!\rangle_\H$ extends continuously to a duality
$\langle\;\!\cdot\;\!,\;\!\cdot\;\!\rangle_{\H^s_t,\H^{-s}_{-t}}$ between $\H^s_t$
and $\H^{-s}_{-t}$. Given two Banach spaces $\G_1$ and $\G_2$, $\B(\G_1,\G_2)$ (resp.
$\K(\G_1,\G_2)$) stands for the set of bounded (resp. compact) operators from $\G_1$
to $\G_2$. Finally, $\otimes$ (resp. $\odot$) stands for the closed (resp. algebraic)
tensor product of Hilbert spaces or of operators.

\section{Explicit formulas for the wave operators}\label{Sec_one}
\setcounter{equation}{0}

Throughout this note, we use the Hilbert spaces $\H:=\ltwo(\R^2)$, $\HS:=\ltwo(\S)$, $\Hrond:=\ltwo\big(\R_+;\HS\big)$ and the unitary operator (spectral transformation)
$\F_0:\H\to\Hrond$ satisfying
$
(\F_0H_0f)(\lambda)
=\lambda(\F_0 f)(\lambda)
\equiv(L\F_0 f)(\lambda)
$
for $f\in\H^2$, a.e. $\lambda\in\R_+$, and $L$ the maximal multiplication operator in
$\Hrond$ by the variable in $\R_+$. The explicit formula for $\F_0$ is
\begin{equation}\label{def_F_0}
\big((\F_0 f)(\lambda)\big)(\omega)
=2^{-1/2}(\F f)(\sqrt\lambda\;\!\omega),
\quad f\in\SS,~\lambda\in\R_+,~\omega\in\S.
\end{equation}

In stationary scattering theory one defines the wave operators $W_\pm$ in terms of
suitable limits of the resolvents of $H_0$ and $H$ near the real axis. We shall
mainly use this approach, noting that for potentials $V$ as in Theorem \ref{Java}
both definitions for the operators $W_\pm$ coincide (see \cite[Thm.~5.3.6]{Yaf92}).
So, starting from \cite[Eq.~2.7.5]{Yaf92} and taking into account the resolvent
formula written in the symmetrized form \cite[Eq.~4.3]{JN01}, one obtains for
suitable $\varphi,\psi\in\Hrond$ (precise conditions are given in Theorem
\ref{BigMama} below) that
\begin{align}
&\big\langle\F_0(W_\pm-1)\F_0^*\varphi,\psi\big\rangle_{\!\Hrond}\nonumber\\
&=-\int_\R\d\lambda\,\lim_{\varepsilon\searrow0}\int_0^\infty\d\mu\,
\big\langle\big\{\F_0vM_0(\lambda\mp i\varepsilon)^{-1}v\;\!\F_0^*
\delta_\varepsilon(L-\lambda)\varphi\big\}(\mu),(\mu-\lambda\mp i\varepsilon)^{-1}
\psi(\mu)\big\rangle_\HS.\label{start}
\end{align}
with
$
\delta_\varepsilon(L-\lambda)
:=\frac\varepsilon\pi (L-\lambda +i\varepsilon)^{-1}(L-\lambda-i\varepsilon)^{-1}
$,
$v:=|V|^{1/2}$, $M_0(z):=u+vR_0(z)v$ and $u(x):=1$ if $V(x)\ge0$ while $u(x)=-1$ if
$V(x)<0$.

In order to exchange the integral over $\mu$ and the limit $\varepsilon\searrow0$, we
need a series of preparatory lemmas. First, we recall that the operator
$\F_0(\lambda):\SS\to\HS$ given by $\F_0(\lambda)f:=(\F_0f)(\lambda)$ extends to an
element of $\B(\H^s_t,\HS)$ for each $s\in\R$ and $t>1/2$, and that the map
$\R_+\ni\lambda\mapsto\F_0(\lambda)\in\B(\H^s_t,\HS)$ is continuous. We also have the
following result which is a direct consequence of what precedes and the estimate
\cite[Thm.~1.1.4]{Yaf10}\;\!:

\begin{Lemma}\label{lem1}
Let $s\ge0$ and $t>1/2$. Then, the function
$
\R_+\ni\lambda\mapsto\langle\lambda\rangle^{1/4}\F_0(\lambda)\in\B(\H^s_t,\HS)
$
is continuous and bounded.
\end{Lemma}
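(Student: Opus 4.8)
The plan is to prove Lemma~\ref{lem1} by reducing the claimed boundedness and continuity to the two facts already recalled in the excerpt: the continuity of $\R_+\ni\lambda\mapsto\F_0(\lambda)\in\B(\H^s_t,\HS)$, and the sharp trace estimate \cite[Thm.~1.1.4]{Yaf10}. First I would make the elementary observation that continuity of the weighted map $\lambda\mapsto\langle\lambda\rangle^{1/4}\F_0(\lambda)$ is immediate, since $\lambda\mapsto\langle\lambda\rangle^{1/4}$ is a continuous scalar function on $\R_+$ and $\lambda\mapsto\F_0(\lambda)$ is already known to be continuous in $\B(\H^s_t,\HS)$; the product of a continuous scalar with a continuous operator-valued function is continuous. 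Thus the only genuine content is the uniform boundedness in $\lambda$.

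For boundedness, the key is to extract the precise $\lambda$-dependence of $\|\F_0(\lambda)\|_{\B(\H^s_t,\HS)}$ from the explicit formula \eqref{def_F_0}. Since $s\ge0$ gives $\H^s_t\hookrightarrow\H_t$ continuously, it suffices to treat the case $s=0$, i.e. to bound $\langle\lambda\rangle^{1/4}\|\F_0(\lambda)\|_{\B(\H_t,\HS)}$. Writing $\big(\F_0(\lambda)f\big)(\omega)=2^{-1/2}(\F f)(\sqrt\lambda\,\omega)$, one sees that $\F_0(\lambda)$ is, up to the Fourier transform, the operator restricting a function on $\R^2$ to the circle of radius $\sqrt\lambda$. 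The estimate \cite[Thm.~1.1.4]{Yaf10} is precisely a trace (restriction-to-sphere) bound controlling $\|\F_0(\lambda)\|_{\B(\H_t,\HS)}$ for $t>1/2$, and carries an explicit power of $\lambda$ that degenerates near $\lambda=0$ and near $\lambda=\infty$. The decisive point is that this power is exactly $\lambda^{-1/4}$ in the regime where it is largest, so multiplying by $\langle\lambda\rangle^{1/4}$ compensates it and yields a bound uniform over all of $\R_+$.

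Concretely, I would split $\R_+=(0,1]\cup[1,\infty)$ and handle the two regimes separately. On the compact-away-from-zero part $[1,\infty)$, the bound from \cite[Thm.~1.1.4]{Yaf10}, after accounting for the Jacobian of the rescaling $\xi\mapsto\sqrt\lambda\,\omega$ in \eqref{def_F_0}, behaves like $\lambda^{-1/4}$, so $\langle\lambda\rangle^{1/4}\|\F_0(\lambda)\|$ stays bounded for large $\lambda$; on $(0,1]$ the same estimate shows $\|\F_0(\lambda)\|$ itself is bounded near $0$ (here the weight $t>1/2$ is exactly what prevents a logarithmic or power blow-up of the restriction operator as the circle shrinks), and $\langle\lambda\rangle^{1/4}$ is bounded on $(0,1]$, so there is nothing further to prove. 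Combining the two regimes with the already-established continuity gives a continuous function on $\R_+$ that is bounded on each piece, hence globally bounded.

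The main obstacle I anticipate is purely bookkeeping: one must track the precise power of $\lambda$ coming out of Yafaev's trace estimate once the scaling in \eqref{def_F_0} is inserted, and verify that it matches the $\langle\lambda\rangle^{1/4}$ weight in both the $\lambda\to0$ and $\lambda\to\infty$ limits. There is no deep analytic difficulty beyond the cited restriction theorem; the weight $1/4$ is dictated by dimension two (the $(n-1)/4$ scaling of the restriction operator with $n=2$), and the role of Lemma~\ref{lem1} is simply to package this into a clean uniform statement that can be fed into the later dominated-convergence argument needed to justify exchanging $\lim_{\varepsilon\searrow0}$ with the $\mu$-integral in \eqref{start}.
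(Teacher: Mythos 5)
Your overall strategy coincides with the paper's: the paper proves Lemma~\ref{lem1} in one line, as a direct consequence of the previously recalled continuity of $\R_+\ni\lambda\mapsto\F_0(\lambda)\in\B(\H^s_t,\HS)$ together with the trace estimate of \cite[Thm.~1.1.4]{Yaf10}, which is exactly the pair of ingredients you identify. Your continuity argument and your treatment of the regime $\lambda\to\infty$ are correct: in dimension two the trace estimate gives $\|\F_0(\lambda)\|_{\B(\H_t,\HS)}\le C\lambda^{-1/4}$, the exponent $(n-1)/4=1/4$ is the right one, and the weight $\langle\lambda\rangle^{1/4}$ compensates it at infinity; the reduction to $s=0$ via the embedding $\H^s_t\hookrightarrow\H_t$ is also fine.

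The gap is in your treatment of $\lambda\searrow0$. You assert that ``the weight $t>1/2$ is exactly what prevents a logarithmic or power blow-up of the restriction operator as the circle shrinks''. This is false: $t>1/2$ is the threshold for the trace on a \emph{fixed} circle to exist and it governs the $\lambda^{-1/4}$ decay at infinity, but it does not control the restriction uniformly as the circle shrinks to a point. Concretely, for $1/2<t<1$ pick $0<\alpha<1-t$ and set $u:=|\cdot|^{-\alpha}\chi$ with $\chi\in C_{\rm c}^\infty(\R^2)$ equal to $1$ near the origin; then $u\in\H^t$ (since $t+\alpha<1$) and, being compactly supported, $u\in\H^t_s$ for every $s$, so $f:=\F^*u\in\H^s_t$, while $\|\F_0(\lambda)f\|_{\HS}=2^{-1/2}\|u(\sqrt\lambda\,\cdot\,)\|_{\HS}=c\,\lambda^{-\alpha/2}\to\infty$ as $\lambda\searrow0$. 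A scaling argument gives the sharp rate $\|\F_0(\lambda)\|_{\B(\H^s_t,\HS)}\gtrsim\lambda^{-(1-t)/2}$, and Yafaev's estimate itself only yields $O(\lambda^{-1/4})$ near $0$, not boundedness. So your small-$\lambda$ step cannot be repaired within the stated hypotheses; in fact, as written (for all $t>1/2$) the boundedness assertion of the lemma fails near $\lambda=0$ when $1/2<t\le1$, an imprecision your proposal inherits from the paper and then makes explicit. Uniform boundedness near $\lambda=0$ does hold once $t>1$, where it follows from the Sobolev embedding $\H^t\subset\linf(\R^2)$, i.e. $\|u(\sqrt\lambda\,\cdot\,)\|_{\HS}\le\sqrt{2\pi}\,\|u\|_{\linf}\le C\|u\|_{\H^t}$ (this is also why Lemma~\ref{lem2} is stated for $t>1$); since the lemma is only ever invoked with $t=\sigma/2>11/2$, your argument becomes correct for the paper's purposes once the small-$\lambda$ step is replaced by this embedding.
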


One also obtains the following result (whose
proof is analogous to the one of \cite[Lemma~2.2]{RT13})\;\!:

\begin{Lemma}\label{lem2}
Let $s >-1/2$ and $t>1$. Then, $\F_0(\lambda)\in\K(\H^s_t,\HS)$ for each
$\lambda\in\R_+$, and the function
$\R_+\ni\lambda\mapsto\F_0(\lambda)\in\K(\H^s_t,\HS)$ is continuous, admits a limit
as $\lambda\searrow0$ and vanishes as $\lambda\to\infty$.
\end{Lemma}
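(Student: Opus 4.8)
The plan is to reduce every assertion to the corresponding property of the family $\F_0(\lambda)$ viewed as bounded operators between fixed auxiliary spaces, and then to upgrade these to statements in $\K(\H^s_t,\HS)$ by factorising through a single compact embedding. Since $s>-1/2$ and $t>1$, I fix indices $s'<s$ and $t'\in(1,t)$ and use the compact Sobolev embedding $\iota:\H^s_t\hookrightarrow\H^{s'}_{t'}$, which is compact because both indices are strictly lowered (see \cite[Sec.~4.1]{ABG}). By the discussion preceding Lemma~\ref{lem1}, $\F_0(\lambda)\in\B(\H^{s'}_{t'},\HS)$ for each $\lambda\in\R_+$ (as $t'>1/2$), so that $\F_0(\lambda)$ factorises as $\F_0(\lambda)\;\!\iota$, a product of a bounded and a compact operator; hence $\F_0(\lambda)\in\K(\H^s_t,\HS)$. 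The same factorisation, combined with the norm-continuity of $\R_+\ni\lambda\mapsto\F_0(\lambda)\in\B(\H^{s'}_{t'},\HS)$ recalled before Lemma~\ref{lem1}, gives at once the continuity of $\lambda\mapsto\F_0(\lambda)\in\K(\H^s_t,\HS)$, since $\|\F_0(\lambda)\;\!\iota-\F_0(\mu)\;\!\iota\|\le\|\F_0(\lambda)-\F_0(\mu)\|_{\B(\H^{s'}_{t'},\HS)}\,\|\iota\|$ and a norm-limit of compact operators is compact.

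For the limit as $\lambda\searrow0$ I would analyse the restriction directly. Writing $g:=\F f\in\H^{t'}_{s'}$ for $f\in\H^{s'}_{t'}$, one has $(\F_0(\lambda)f)(\omega)=2^{-1/2}g(\sqrt\lambda\,\omega)$, and since $t'>1$ the Sobolev embedding makes $g$ locally H\"older continuous near the origin, with $|g(x)-g(0)|\le C\,|x|^{\alpha}\,\|g\|_{\H^{t'}_{s'}}$ for $|x|\le1$ and some $\alpha>0$ (the weight being harmless near $0$). Hence the natural candidate limit is the rank-one operator $\F_0(0):f\mapsto 2^{-1/2}(\F f)(0)\,\mathbf 1_\S$, and the H\"older estimate yields $\|\F_0(\lambda)-\F_0(0)\|_{\B(\H^{s'}_{t'},\HS)}\le C\,\lambda^{\alpha/2}\to0$. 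Composing with $\iota$ shows that $\F_0(\lambda)$ converges in $\K(\H^s_t,\HS)$ as $\lambda\searrow0$. It is precisely this step that forces the hypothesis $t>1$ (rather than the $t>1/2$ of Lemma~\ref{lem1}): the shrinking circle degenerates to a point, and only the pointwise evaluation of $\F f$ at the origin can survive in the limit, which requires $\F f$ to be continuous there.

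Finally, for the vanishing as $\lambda\to\infty$ I would not invoke the compact embedding but rather exploit that each $\F_0(\lambda)$ is already compact, so that its $\K(\H^s_t,\HS)$-norm equals its $\B(\H^s_t,\HS)$-norm. For $s\ge0$ this norm is $O\big(\langle\lambda\rangle^{-1/4}\big)$ by Lemma~\ref{lem1}; for the remaining range $-1/2<s<0$ one uses the trace estimate underlying Lemma~\ref{lem1}, namely \cite[Thm.~1.1.4]{Yaf10}, directly, which is valid down to $s>-1/2$ and $t>1/2$ and again produces a negative power of $\langle\lambda\rangle$. This is where the threshold $s>-1/2$ genuinely enters, being the restriction/trace exponent for a curve in $\R^2$. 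I expect the two delicate points to be the norm (rather than merely strong) convergence at $\lambda=0$ and the extraction of the decay \emph{uniformly} over the full range $s>-1/2$ from the trace estimate; by contrast the compactness and the continuity on $\R_+$ are immediate consequences of the factorisation through $\iota$.
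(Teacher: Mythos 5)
Your proposal is correct in its overall architecture, and since the paper offers no written proof of Lemma~\ref{lem2} (it only points to the ``analogous'' proof of \cite[Lemma~2.2]{RT13}), your route---factorising through a compact embedding $\iota:\H^s_t\hookrightarrow\H^{s'}_{t'}$ to get compactness and norm-continuity, then quantitative estimates at the two endpoints---is essentially the intended one. Your treatment of the limit $\lambda\searrow0$ is in fact the genuinely two-dimensional part of the argument: in \cite{RT13} the corresponding operator carries an extra factor $\lambda^{1/4}$ and simply vanishes at the origin, whereas here the limit is the non-zero rank-one operator $f\mapsto2^{-1/2}(\F f)(0)\,\mathbf{1}_\S$. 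Your local Sobolev/H\"older argument (using that $\F f$ has smoothness index $t'>1$, the continuity threshold in dimension $2$) identifies this limit correctly, and this is indeed where the hypothesis $t>1$ enters; this is precisely the point where the paper's statement (``admits a limit'') deviates from its three-dimensional model (``vanishes'').

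The one step that would not survive refereeing as written is the decay at infinity for $-1/2<s<0$. You propose to quote \cite[Thm.~1.1.4]{Yaf10} ``directly'', claiming it is valid down to $s>-1/2$; but that estimate concerns weighted $\ltwo$ spaces, \ie\ the case $s=0$---this is exactly why Lemma~\ref{lem1} is stated only for $s\ge0$, and indeed the bound $\|\F_0(\lambda)\|_{\B(\H^s_t,\HS)}=O\big(\langle\lambda\rangle^{-1/4}\big)$ is \emph{false} when $s<0$ (test functions whose Fourier transform is a unit bump at distance $\sqrt\lambda$ from the origin show the true size is $\langle\lambda\rangle^{|s|/2-1/4}$). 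The gap is easily filled: $\langle P\rangle^s$ (with $P$ the momentum operator) maps $\H^s_t$ isomorphically onto $\H_t$ by \cite[Sec.~4.1]{ABG}, and since $(\F f)(\xi)=\langle\xi\rangle^{-s}\big(\F\langle P\rangle^s f\big)(\xi)$ with $\langle\xi\rangle=(1+\lambda)^{1/2}$ on the circle $|\xi|=\sqrt\lambda$, one gets the identity $\F_0(\lambda)f=(1+\lambda)^{-s/2}\,\F_0(\lambda)\,\langle P\rangle^sf$; applying the $s=0$ estimate of Lemma~\ref{lem1} to $\langle P\rangle^sf\in\H_t$ yields $\|\F_0(\lambda)\|_{\B(\H^s_t,\HS)}\le C\,\langle\lambda\rangle^{|s|/2-1/4}$. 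This exponent is negative exactly when $s>-1/2$, which substantiates---and makes precise---your remark that this is where the threshold $s>-1/2$ genuinely enters.
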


From now on, we use the notation $C_{\rm c}(\R_+;\G)$ for the set of compactly
supported continuous functions from $\R_+$ to some Hilbert space $\G$. With this
notation and what precedes, we note that the multiplication operator
$N:C_{\rm c}(\R_+;\H^s_t)\to\Hrond$ given by
\begin{equation}\label{defdeN}
(N\xi)(\lambda):=\F_0(\lambda)\;\!\xi(\lambda),
\quad\xi\in C_{\rm c}(\R_+;\H^s_t),~\lambda\in\R_+,
\end{equation}
extends for $s>-1/2$ and $t>1$ to an element of
$\B\big(\ltwo(\R_+;\H^s_t),\Hrond\big)$. We also note that the limit
$\varepsilon\searrow0$ of the operator $\F_0^*\delta_\varepsilon(L-\lambda)$
appearing in \eqref{start} satisfies the following (see \cite[Lemma~2.3]{RT13} for a
proof)\;\!:

\begin{Lemma}\label{lemlimite}
For $s\ge0$, $t>1$, $\lambda\in\R_+$ and $\varphi\in C_{\rm c}(\R_+;\HS)$, one has
$
\lim_{\varepsilon\searrow0}\F_0^*\;\!\delta_\varepsilon(L-\lambda)\varphi
=\F_0(\lambda)^*\varphi(\lambda)
$
in $\H^{-s}_{-t}$.
\end{Lemma}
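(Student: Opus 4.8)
The plan is to write $\F_0^*\;\!\delta_\varepsilon(L-\lambda)\varphi$ as an explicit $\H^{-s}_{-t}$-valued integral and then to recognise its kernel as a Poisson approximate identity. Since $\delta_\varepsilon(L-\lambda)$ acts on $\Hrond$ as fiberwise multiplication by the scalar $p_\varepsilon(\mu):=\frac\varepsilon\pi\big[(\mu-\lambda)^2+\varepsilon^2\big]^{-1}$, and since $\F_0^*$ is the adjoint of $\F_0$, I would first establish, by pairing with an arbitrary $f\in\H^s_t$ and using the identity $\langle\F_0(\mu)f,\varphi(\mu)\rangle_\HS=\langle f,\F_0(\mu)^*\varphi(\mu)\rangle$, the direct integral representation
$$\F_0^*\;\!\delta_\varepsilon(L-\lambda)\varphi=\int_0^\infty\d\mu\,p_\varepsilon(\mu)\,\F_0(\mu)^*\varphi(\mu),$$
the right-hand side being an $\H^{-s}_{-t}$-valued Bochner integral.

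Next I would verify that the integrand $g(\mu):=\F_0(\mu)^*\varphi(\mu)$ defines a continuous, bounded and compactly supported map $\R_+\to\H^{-s}_{-t}$. Indeed, taking adjoints in Lemma~\ref{lem1} (which applies since $s\ge0$ and $t>1>1/2$) shows that $\R_+\ni\mu\mapsto\F_0(\mu)^*\in\B(\HS,\H^{-s}_{-t})$ is continuous and bounded, while $\varphi\in C_{\rm c}(\R_+;\HS)$ is continuous with compact support in $\R_+$; hence $g$ inherits all three properties and the Bochner integral above is well defined.

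It then remains to pass to the limit $\varepsilon\searrow0$, for which I would use the standard approximate-identity argument for the Poisson kernel $p_\varepsilon$. Writing
$$\int_0^\infty\d\mu\,p_\varepsilon(\mu)\,g(\mu)-g(\lambda)=\int_0^\infty\d\mu\,p_\varepsilon(\mu)\big(g(\mu)-g(\lambda)\big)+g(\lambda)\Big(\int_0^\infty\d\mu\,p_\varepsilon(\mu)-1\Big),$$
I would estimate both terms in the norm of $\H^{-s}_{-t}$. For the first term one splits the integral at some $\eta>0$: on $|\mu-\lambda|<\eta$ the continuity of $g$ at $\lambda$ makes the integrand uniformly small while $\int_0^\infty\d\mu\,p_\varepsilon(\mu)\le1$, and on $|\mu-\lambda|\ge\eta$ the boundedness of $g$ together with $\int_{|\mu-\lambda|\ge\eta}\d\mu\,p_\varepsilon(\mu)\to0$ gives a vanishing contribution; for the second term, since $\lambda>0$ one has $\int_0^\infty\d\mu\,p_\varepsilon(\mu)\to1$. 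The only genuinely structural step is the direct integral representation of the first paragraph, carried out in the weak space $\H^{-s}_{-t}$ into which $\H$ embeds continuously; once it is in place the limit is a routine mollifier computation.
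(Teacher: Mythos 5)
Your proof is correct and is essentially the argument this lemma rests on: the paper itself gives no proof of Lemma~\ref{lemlimite}, deferring instead to \cite[Lemma~2.3]{RT13}, whose proof is precisely your scheme --- represent $\F_0^*\;\!\delta_\varepsilon(L-\lambda)\varphi$ as an $\H^{-s}_{-t}$-valued Bochner integral of $\F_0(\mu)^*\varphi(\mu)$ against the Poisson kernel, note that $\mu\mapsto\F_0(\mu)^*\varphi(\mu)$ is continuous with compact support in $\R_+$, and conclude by the standard approximate-identity limit (using $\lambda>0$ so that the kernel's mass concentrates inside $\R_+$). No gaps; the pairing argument with $f\in\H^s_t$ that identifies the Bochner integral, and the splitting at $|\mu-\lambda|=\eta$, are exactly the points that need checking and you check them.
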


The next necessary result concerns the limit
$
M_0(\lambda+i0)^{-1}:=\lim_{\varepsilon\searrow0}M_0(\lambda + i\varepsilon)^{-1}
$,
$\lambda\in\R_+$ (a similar result holds for $M_0(\lambda-i0)^{-1}$). First, we
recall that $H$ does not have positive eigenvalues \cite[Sec.~1]{Kat59}. Therefore,
for $V$ as in Theorem \ref{Java}, one infers from the limiting absorption principles
for $H_0$ and $H$ \cite[Thm.~4.2]{Agm75} the existence in $\B(\H)$ of the limits
$M_0(\lambda+i0):=\lim_{\varepsilon\searrow0}\big(u+vR_0(\lambda+i\varepsilon)v\big)$
and $M(\lambda+i0):=\lim_{\varepsilon\searrow0}\big(u-vR(\lambda+i\varepsilon)v\big)$,
and their continuity with respect to $\lambda$. This, together with the fact that
$uM(\lambda+i\varepsilon)u=M_0(\lambda+i\varepsilon)^{-1}$ for $\varepsilon>0$,
implies the existence and the continuity of the map
$\R_+\ni\lambda\mapsto M_0(\lambda+i0)^{-1}\in\B(\H)$. Also, one has
$\lim_{\lambda\to\infty}M_0(\lambda+i0)^{-1}=u$ in $\B(\H)$, since
$\lim_{\lambda\to\infty}vR_0(\lambda+i0)v=0$ in $\B(\H)$ \cite[Prop.~7.1.2]{Yaf10}.
On the other hand, the existence in $\B(\H)$ of the limit
$\lim_{\lambda\searrow0}M_0(\lambda +i0)^{-1}$ (which has been studied in detail in
\cite{JN01}) highly depends on the presence or absence of eigenvalues or resonances
at $0$-energy; the limit does not exist in their presence, but in the generic case
(\ie~in the absence of eigenvalues or resonances at $0$-energy) the limit exists
\cite[Eq.~(6.55)]{JN01}. With this information, we obtain the following\;\!:

\begin{Lemma}\label{lem_on_sigma}
Let $V$, $\sigma$ and $H$ be as in Theorem \ref{Java}. Then, the map
$
\R_+\ni\lambda\mapsto M_0(\lambda+i0)^{-1}\in\B(\H)
$
is continuous and bounded. Furthermore, the multiplication operator
$
B:C_{\rm c}\big(\R_+;\HS\big)\to\ltwo\big(\R_+;\H_{\sigma/2}\big)
$
given by
\begin{equation}\label{defdeB}
(B\varphi)(\lambda)
:=vM_0(\lambda+i0)^{-1}v\;\!\F_0(\lambda)^*\varphi(\lambda)\in\H_{\sigma/2},
\quad\varphi\in C_{\rm c}\big(\R_+;\HS\big),~\lambda\in\R_+,
\end{equation}
extends to an element of $\B\big(\Hrond,\ltwo(\R_+;\H_{\sigma/2})\big)$.
\end{Lemma}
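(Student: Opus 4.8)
For the first assertion, the plan is to read off continuity from the discussion preceding the statement: the existence and continuity on $\R_+$ of the map $\lambda\mapsto M_0(\lambda+i0)^{-1}\in\B(\H)$ have already been secured there, via the boundary-value identity $uM(\lambda+i0)u=M_0(\lambda+i0)^{-1}$ combined with the limiting absorption principles for $H_0$ and $H$. To upgrade continuity to boundedness I would invoke the two limits recalled there, namely $\lim_{\lambda\to\infty}M_0(\lambda+i0)^{-1}=u$ in $\B(\H)$ and, in the generic case, the existence in $\B(\H)$ of $\lim_{\lambda\searrow0}M_0(\lambda+i0)^{-1}$. A $\B(\H)$-valued function that is continuous on $(0,\infty)$ and admits finite limits at both endpoints extends continuously to the compact set $[0,\infty]$, hence is bounded; this gives $\sup_{\lambda\in\R_+}\|M_0(\lambda+i0)^{-1}\|_{\B(\H)}<\infty$.

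For the second assertion, the structural observation I would build on is the sandwich form of the integrand $vM_0(\lambda+i0)^{-1}v\,\F_0(\lambda)^*$: the inner factor $M_0(\lambda+i0)^{-1}$ is only controlled on the unweighted space $\H=\H^0_0$, so all the spatial decay needed to reach $\H_{\sigma/2}$ has to be produced by the two multiplication operators $v=|V|^{1/2}$. Since $|V(x)|\le{\rm Const.}\,(1+|x|)^{-\sigma}$, the function $\langle\cdot\rangle^{\sigma/2}v$ belongs to $\linf(\R^2)$, and I would record the two boundedness statements this yields: $v\in\B(\H,\H_{\sigma/2})$, and $v\in\B(\H_{-t},\H)$ for any $t\le\sigma/2$ (because $v$ maps $\H_{-t}$ into $\H_{\sigma/2-t}$, which embeds continuously into $\H$ once $\sigma/2-t\ge0$). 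As $\sigma>11$ I am free to fix such a $t$ with $t>1/2$.

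Chaining these factors would then give a pointwise estimate in $\lambda$. Taking $s=0$ in Lemma \ref{lem1} and passing to adjoints yields $\F_0(\lambda)^*\in\B(\HS,\H_{-t})$ with $\sup_\lambda\|\langle\lambda\rangle^{1/4}\F_0(\lambda)^*\|_{\B(\HS,\H_{-t})}<\infty$; composing successively with $v\in\B(\H_{-t},\H)$, with $M_0(\lambda+i0)^{-1}\in\B(\H)$ (bounded uniformly in $\lambda$ by the first assertion), and with $v\in\B(\H,\H_{\sigma/2})$ produces a constant $C$ independent of $\lambda$ such that $\|(B\varphi)(\lambda)\|_{\H_{\sigma/2}}\le C\|\varphi(\lambda)\|_\HS$ for every $\varphi\in C_{\rm c}(\R_+;\HS)$. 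Squaring and integrating over $\lambda\in\R_+$ gives $\|B\varphi\|_{\ltwo(\R_+;\H_{\sigma/2})}\le C\|\varphi\|_\Hrond$, and since $C_{\rm c}(\R_+;\HS)$ is dense in $\Hrond$ the operator $B$ extends to an element of $\B\big(\Hrond,\ltwo(\R_+;\H_{\sigma/2})\big)$. The step I regard as the real crux is not any single estimate but the weighted-space bookkeeping that routes each factor between spaces on which it is genuinely bounded---in particular ensuring that $M_0(\lambda+i0)^{-1}$ is only ever applied on the unweighted $\H$; once this is set up, the $\lambda$-integration is immediate because the pointwise bound is uniform in $\lambda$.
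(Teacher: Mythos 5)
Your proof is correct and follows essentially the same route as the paper's: the paper likewise obtains the continuity and boundedness of $\lambda\mapsto M_0(\lambda+i0)^{-1}$ from the discussion preceding the lemma (limiting absorption, the limit $u$ as $\lambda\to\infty$, and the existence of the limit at $0$ guaranteed by $\sigma>11$ via the results of Jensen--Nenciu), and then deduces the boundedness of $B$ from this together with Lemma~\ref{lem1}. Your write-up simply makes explicit the weighted-space bookkeeping (namely $v\in\B(\H,\H_{\sigma/2})$, $v\in\B(\H_{-t},\H)$ for $1/2<t\le\sigma/2$, and the $\lambda$-uniform bound on $\F_0(\lambda)^*\in\B(\HS,\H_{-t})$ obtained from Lemma~\ref{lem1} by duality) that the paper's one-line proof leaves implicit.
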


\begin{proof}
The condition $\sigma>11$ is imposed in order to fulfill the assumptions of
\cite[Thm.~6.2 \& Eq.~(6.55)]{JN01} for the existence of the limit
$\lim_{\lambda\searrow0}M_0(\lambda+i0)^{-1}$ in $\B(\H)$. The continuity and the
boundedness of the map $\R_+\ni\lambda\mapsto M_0(\lambda+i0)^{-1}\in\B(\H)$ follow
then from what has been said before. Finally, the second part of the statement is a
consequence of what precedes and Lemma \ref{lem1}.
\end{proof}

Before deriving our first formula for $W_-$, we recall that the dilation group
$\{U^+_\tau\}_{\tau\in\R}$ in $\ltwo(\R_+)$ with self-adjoint generator $A_+$ is
given by $\big(U^+_\tau\varphi\big)(\lambda):=\e^{\tau/2}\varphi(\e^\tau\lambda)$ for
$\varphi\in C_{\rm c}(\R_+)$, $\lambda\in\R_+$ and $\tau\in\R$. We also introduce the
function $\vartheta\in C(\R)\cap\linf(\R)$ given by
$\vartheta(\nu):=\frac12\big(1-\tanh(\pi\nu)\big)$ for $\nu\in\R$. Finally, we recall
that the Hilbert spaces $\ltwo(\R_+;\H^s_t)$ and $\Hrond$ can be naturally identified
with the Hilbert spaces $\ltwo(\R_+)\otimes\H^s_t$ and $\ltwo(\R_+)\otimes\HS$.

\begin{Theorem}\label{BigMama}
Let $V$, $\sigma$ and $H$ be as in Theorem \ref{Java}. Then, one has in $\B(\Hrond)$
the equality
\begin{equation}\label{eqequal}
\F_0(W_--1)\;\!\F_0^*
=-2\pi iN\;\!\big\{\vartheta(A_+)\otimes1_{\H_{\sigma/2}}\big\}B,
\end{equation}
with $N$ and $B$ defined in \eqref{defdeN} and \eqref{defdeB}.
\end{Theorem}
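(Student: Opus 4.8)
The plan is to start from the stationary formula \eqref{start}, specialise it to the lower sign (which produces $W_-$, together with the regularisations $M_0(\lambda+i\varepsilon)^{-1}$ and $(\mu-\lambda+i\varepsilon)^{-1}$), push the limit $\varepsilon\searrow0$ through the $\mu$-integral, and then recognise in the energy variable a dilation-invariant kernel equal to a multiple of $\vartheta(A_+)$. Since a generic $\varphi\in C_{\rm c}(\R_+;\HS)$ vanishes for $\lambda\le0$ and $L\ge0$, the quantity $\F_0^*\delta_\varepsilon(L-\lambda)\varphi$ tends to $0$ for $\lambda\le0$, so the outer integral over $\R$ collapses to an integral over $\R_+$. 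I would fix such a $\varphi$ and a $\psi\in C_{\rm c}(\R_+;\HS)$, prove the identity for the associated sesquilinear form, and then extend it to all of $\Hrond$ by density, using the boundedness of $N$, of $B$ (Lemma~\ref{lem_on_sigma}) and of $\vartheta(A_+)$.

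First I would justify interchanging $\lim_{\varepsilon\searrow0}$ with $\int_0^\infty\d\mu$. For fixed $\lambda\in\R_+$ the three $\varepsilon$-dependent factors are exactly the ones controlled by the preparatory lemmas: Lemma~\ref{lemlimite} gives $\F_0^*\delta_\varepsilon(L-\lambda)\varphi\to\F_0(\lambda)^*\varphi(\lambda)$, Lemma~\ref{lem_on_sigma} gives the norm convergence $M_0(\lambda+i\varepsilon)^{-1}\to M_0(\lambda+i0)^{-1}$ with an $\varepsilon$-uniform bound, and Lemma~\ref{lem1} supplies the estimate $\|\F_0(\mu)\|_{\B(\H_{\sigma/2},\HS)}\le{\rm Const.}\,\langle\mu\rangle^{-1/4}$ (here $\sigma>11$ guarantees that $\H_{\sigma/2}$ carries the decay index demanded by Lemmas~\ref{lem1}--\ref{lem_on_sigma}). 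Passing to the limit and recognising $vM_0(\lambda+i0)^{-1}v\,\F_0(\lambda)^*\varphi(\lambda)=(B\varphi)(\lambda)$ from \eqref{defdeB}, I obtain
\[
\bigl\langle\F_0(W_--1)\F_0^*\varphi,\psi\bigr\rangle_{\!\Hrond}
=-\int_0^\infty\!\d\lambda\int_0^\infty\!\d\mu\,
\bigl\langle\F_0(\mu)(B\varphi)(\lambda),\,(\mu-\lambda+i0)^{-1}\psi(\mu)\bigr\rangle_{\!\HS}.
\]

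Next I would identify the remaining double integral. Regarded as an operator acting on the energy variable of $\ltwo(\R_+)$, the boundary-value kernel $(\mu-\lambda+i0)^{-1}$ is homogeneous of degree $-1$ and hence commutes with the dilation group $\{U^+_\tau\}_{\tau\in\R}$; it is therefore a function of the generator $A_+$ alone. Its symbol is computed by passing to the logarithmic variable $\lambda=\e^y$, $\mu=\e^x$, which turns the operator into convolution on $\ltwo(\R)$ by $\tfrac{1}{2\pi i}\bigl(2\sinh\tfrac{x-y}{2}+i0\bigr)^{-1}$, whose Fourier transform is precisely $\bigl(\e^{2\pi\,\cdot}+1\bigr)^{-1}=\vartheta(\,\cdot\,)$. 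Thus the double integral is $2\pi i$ times the sesquilinear form of $N\{\vartheta(A_+)\otimes1_{\H_{\sigma/2}}\}B$ (the $i0$-prescription fixing both the constant and the sign of the argument of $\vartheta$), and the overall minus sign then yields \eqref{eqequal} on $C_{\rm c}(\R_+;\HS)$, hence on $\Hrond$.

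The main obstacle is the interchange performed in the second step. Because $(\mu-\lambda+i\varepsilon)^{-1}$ develops a non-integrable $|\mu-\lambda|^{-1}$ singularity as $\varepsilon\searrow0$, one cannot dominate the $\mu$-integrand by a fixed integrable function and invoke dominated convergence naively; instead I would treat the singular $\mu$-integration as the $\ltwo$-bounded Cauchy (Hilbert-transform) map $\psi\mapsto\int_0^\infty(\mu-\lambda+i0)^{-1}\psi(\mu)\,\d\mu$ and pass to the limit in the norm of $\Hrond$ with the help of the $\varepsilon$-uniform operator bounds of Lemmas~\ref{lem1} and \ref{lem_on_sigma}. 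The continuity and compactness recorded in Lemma~\ref{lem2} ensure that the energy-fibred operators depend continuously on $\lambda$ (with limits at $0$ and $\infty$), so that the Mellin computation applies fibrewise and the final product is a genuine bounded operator on $\Hrond$.
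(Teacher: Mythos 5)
Your skeleton is the same as the paper's: start from \eqref{start}, push the limit $\varepsilon\searrow0$ through the $\mu$-integration, and then recognise the boundary-value kernel $(\mu-\lambda+i0)^{-1}$ as a dilation-invariant (homogeneous of degree $-1$) kernel equal, up to the factor $-2\pi i$, to $\vartheta(A_+)$ acting in the energy variable. Your Mellin-transform computation of the symbol is correct modulo the sign conventions you acknowledge, and it is a legitimate substitute for the paper's identification step, which reaches the same dilation kernel by writing $(\mu-\lambda+i\varepsilon)^{-1}=-i\int_0^\infty\d z\,\e^{i(\mu-\lambda)z}\e^{-\varepsilon z}$, computing a distributional integral, and invoking the known Fourier transform of $\vartheta$.

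The genuine gap is exactly at the step you yourself flag as the main obstacle, and your proposed fix does not close it. The limit in \eqref{start} is taken for \emph{fixed} $\lambda$, and the vector $\mu\mapsto(\mu-\lambda+i\varepsilon)^{-1}\psi(\mu)$ has $\Hrond$-norm of order $\varepsilon^{-1/2}$, so no pairing of norm-convergent families is available. Uniform boundedness of $\langle\mu\rangle^{1/4}\F_0(\mu)$ and of $M_0(\lambda+i\varepsilon)^{-1}$ (Lemmas \ref{lem1} and \ref{lem_on_sigma}), together with $\ltwo$-boundedness of the limiting Cauchy operator, yield no convergence statement: uniform bounds alone never force a singular integral to converge to its boundary value. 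What a Plemelj-type passage to $(\mu-\lambda+i0)^{-1}$ actually requires is a modulus-of-continuity (Dini/H\"older) estimate, uniform in $\varepsilon$, for the $\HS$-valued function $\mu\mapsto\big\{\F_0vM_0(\lambda+i\varepsilon)^{-1}v\F_0^*\delta_\varepsilon(L-\lambda)\varphi\big\}(\mu)$ near $\mu=\lambda$, and none of the lemmas you cite provides regularity beyond continuity. Worse, your chosen test class $\psi\in C_{\rm c}(\R_+;\HS)$ is merely continuous in the energy variable, and for merely continuous densities the principal-value integral $\Pv\int\d\mu\,(\mu-\lambda)^{-1}\langle\cdots\rangle_\HS$ need not even exist pointwise, so the intermediate identity you display is not defined on your own dense set. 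This is precisely why the paper restricts to $\psi\in C_{\rm c}^\infty(\R_+)\odot C(\S)$, replaces the resolvent kernel by the Laplace integral over $z$, and obtains the $\lone(\R_+,\d z)$ domination needed for Lebesgue's theorem from the bound $\big\|\big(\F_1^*\langle P_1\rangle^2h_\lambda\big)(z)\big\|_{\H_{-t}}\le{\rm Const.}$, an integration-by-parts estimate that consumes two derivatives of $\psi$ in the energy variable and the weight $t=\sigma/2>3$. Some such quantitative smoothness input (either the paper's, or a uniform H\"older estimate on $\mu\mapsto\F_0(\mu)$ combined with smooth-in-energy test vectors) must be supplied; without it the interchange of the limit and the $\mu$-integral, which is the actual analytic content of the first half of the proof, remains unproven.
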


The proof below consists in two parts. First, we show that the expression
\eqref{start} is well-defined for $\varphi$ and $\psi$ in dense subsets of $\Hrond$,
and then we prove the stated equality.

\begin{proof}
Take $\varphi\in C_{\rm c}(\R_+;\HS)$ and $\psi\in C_{\rm c}^\infty(\R_+)\odot C(\S)$,
and set $t:=\sigma/2 $. Then, we have for each $\varepsilon>0$ and $\lambda\in\R_+$
the inclusions
$
g_\varepsilon(\lambda)
:=vM_0(\lambda+i\varepsilon)^{-1}v\;\!\F_0^*\delta_\varepsilon(L-\lambda)\varphi\in\H_t
$
and $f(\lambda):=\F_0(\lambda)^*\psi(\lambda)\in\H_{-t}$. So, using the formula
$
(\mu-\lambda+i\varepsilon)^{-1}
=-i\int_0^\infty\d z\e^{i(\mu-\lambda)z}\e^{-\varepsilon z}
$
and then applying Fubini's theorem, one obtains that \eqref{start} is equal to
\begin{align}
-i\;\!\lim_{\varepsilon\searrow0}\int_0^\infty\d z\,\e^{-\varepsilon z}
\bigg\langle g_\varepsilon(\lambda),\int_{-\lambda}^\infty\d\nu\,\e^{i\nu z}
f(\nu+\lambda)\bigg\rangle_{\H_t,\H_{-t}}.\label{eq2}
\end{align}
Now, we know from Lemma \ref{lemlimite} and the paragraph following it that
$g_\varepsilon(\lambda)$ converges to
$
g_0(\lambda):=v M_0(\lambda+i0)^{-1}v\F_0^*(\lambda)\varphi(\lambda)
$
in $\H_t$ as $\varepsilon\searrow0$. Therefore, the family
$\|g_\varepsilon(\lambda)\|_{\H_t}$ can be bounded independently of $\varepsilon\in(0,1)$,
and thus the absolute value of the integrant in \eqref{eq2} can also be bounded independently of $\varepsilon\in(0,1)$.

To exchange the limit $\lim_{\varepsilon\searrow0}$ and the integral over $z$ in
\eqref{eq2}, it remains to show that $z\mapsto\big\|\int_{-\lambda}^\infty\d\nu\,
\e^{i\nu z}f(\nu+\lambda)\big\|_{\H_{-t}}\in \lone(\R_+,\d z)$. For that purpose, we write
$h_\lambda$ for the trivial extension
of the function
$
(-\lambda,\infty)\ni\nu\mapsto f(\nu+\lambda)\in\H_{-t}
$
to all of $\R$, and then note that $\big\|\int_{-\lambda}^\infty\d\nu\,
\e^{i\nu z}f(\nu+\lambda)\big\|_{\H_{-t}}$ can be rewritten
as $(2\pi)^{1/2}\|(\F_1^*h_\lambda)(z)\|_{\H_{-t}}$, with $\F_1$ the $1$-dimensional
Fourier transform. Furthermore, if $P_1$ denotes the
self-adjoint operator $-i\frac\d{\d z}$ on $\R$, then
$$
\big\|\big(\F_1^*h_\lambda\big)(z)\big\|_{\H_{-t}}
=\langle z\rangle^{-2}
\big\|\big(\F_1^*\langle P_1\rangle^2h_\lambda\big)(z)\big\|_{\H_{-t}},
\quad z\in\R_+\;\!.
$$
So, one would have that
$
z \mapsto \|(\F_1^*h_\lambda)(z)\|_{\H_{-t}}\in\lone(\R_+,\d z)
$
if
$
\big\|\big(\F_1^*\langle P_1\rangle^2h_\lambda\big)(z)\big\|_{\H_{-t}}
$
were bounded independently of $z$. Now, if $\psi=\eta\otimes\xi$ with
$\eta\in C_{\rm c}^\infty(\R_+)$ and $\xi\in C(\S)$, then one has for any $x\in\R^2$
$$
\big(f(\nu+\lambda)\big)(x)
=\frac1{2^{3/2}\pi}\;\!\eta(\nu+\lambda)\int_\S\d\omega\,
\e^{i\sqrt{\nu+\lambda}\;\!\omega\cdot x}\xi(\omega),
$$
which in turns implies that
$\big|\big\{\big(\F_1^*\langle P_1\rangle^2h_\lambda\big)(z)\big\}(x)\big|
\le{\rm Const.}\;\!\langle x\rangle^2$,
with a constant independent of $x\in\R^2$ and $z\in\R_+$. Since the r.h.s. belongs to
$\H_{-t}$ for $t>3$, one deduces that
$\big\|\big(\F_1^*\langle P_1\rangle^2h_\lambda\big)(z)\big\|_{\H_{-t}}$ is bounded
independently of $z$ for each $\psi=\eta\otimes\xi$, and thus for each
$\psi\in C^\infty_{\rm c}(\R_+)\odot C(\S)$ by linearity. As a consequence, one can
    apply Lebesgue dominated convergence theorem in \eqref{eq2}, and thus conclude that \eqref{start} is equal to
$\big\langle\F_0(W_\pm-1)\;\!\F_0^*\varphi,\psi\big\rangle_{\!\Hrond}$ on the sets of vectors $\varphi\in C_{\rm c}(\R_+;\HS)$
and $\psi\in C_{\rm c}^\infty(\R_+)\odot C(\S)$.

The next task is to prove \eqref{eqequal}. Let $\chi_+$ denote the characteristic function for $\R_+$. Then, a computation as in the proof of \cite[Thm.~2.6]{RT13} shows, in the sense of
distributions with values in $\H_{-t}$, that
\begin{align*}
\int_0^\infty\d z\int_\R\d\nu\;\!\e^{i\nu z}h_\lambda(\nu)
&=\sqrt{2\pi}\int_\R\d\mu\,\big(\F_1^*\chi_+\big)\big(\lambda(\e^\mu-1)\big)
\;\!\lambda\e^{\mu/2}\big\{\big(U_\mu^+\otimes1_{\H_{-t}}\big)f\big\}(\lambda)\\
&=\int_\R\d\mu\left(\pi\;\!\delta_0(\e^\mu-1)
+i\,\Pv\frac{\e^{\mu/2}}{\e^\mu-1}\right)
\big\{\big(U_\mu^+\otimes1_{\H_{-t}}\big)f\big\}(\lambda),
\end{align*}
with $\delta_0$ the Dirac delta distribution and $\Pv$ the principal value. So,
using successively the identity $\frac{\e^{\mu/2}}{\e^\mu-1}=\frac1{2\sinh(\mu/2)}$,
the equality \cite[Table 20.1]{Jef95}
$
\big(\F_1\vartheta\big)(\mu)
=\sqrt{\frac\pi2}\;\!\delta_0\big(\e^\mu-1\big)
+\frac i{2\sqrt{2\pi}}\;\!\Pv\frac1{\sinh(\mu/2)}
$
and the equation
$
\big\{\vartheta(A_+)\otimes1_{\H_{-t}}\big\}f
=\frac1{\sqrt{2\pi}}\int_\R\d\mu\,\big(\F_1\vartheta\big)(\mu)
\big(U_\mu^+\otimes 1_{\H_{-t}}\big)f
$,
one infers that
\begin{align*}
\big\langle\F_0(W_--1)\;\!\F_0^*\varphi,\psi\big\rangle_{\!\Hrond}
&=i\;\!\sqrt{2\pi}\int_{\R_+}\d\lambda\left\langle g_0(\lambda),\int_\R\d\mu\,
\big(\F_1\vartheta\big)(\mu)\;\!
\big\{\big(U_\mu^+\otimes1_{\H_{-t}}\big)f\big\}(\lambda)
\right\rangle_{\H_t,\H_{-t}} \\
&=\big\langle-2\pi iN\;\!\big\{\vartheta(A_+)\otimes1_{\H_{t}}\big\}B\varphi,
\psi\big\rangle_{\!\Hrond}.
\end{align*}
This concludes the proof, since the sets of vectors $\varphi\in C_{\rm c}(\R_+;\HS)$
and $\psi\in C_{\rm c}^\infty(\R_+)\odot C(\S)$ are dense in $\Hrond$.
\end{proof}

We now recall a final lemma which is essential for Theorem \ref{Java}. Its proof
is identical to the proof of \cite[Lemma~2.7]{RT13}.

\begin{Lemma}\label{Lemma_compact}
Take $s>-1/2$ and $t>1$. Then, the difference
$
\big\{\vartheta(A_+)\otimes1_{\HS}\big\}N
-N\big\{\vartheta(A_+)\otimes 1_{\H^s_t}\big\}
$
belongs to $\K\big(\ltwo(\R_+;\H_t^s),\Hrond\big)$.
\end{Lemma}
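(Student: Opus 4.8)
The plan is to pass to the translation representation of the dilation group $\{U^+_\tau\}$, in which $\vartheta(A_+)$ becomes a Fourier multiplier and $N$ becomes a multiplication operator by an operator-valued function, and then to invoke the classical fact that functions of position and of momentum with continuous symbols commute modulo compact operators. The covariance of $\F_0(\lambda)$ under dilations is not needed\;\!: what matters is only that $N$ is genuinely a multiplication operator in the variable $\lambda$, together with the qualitative behaviour of its fibers supplied by Lemma \ref{lem2}.

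First I would introduce the unitary operator $\mathcal V:\ltwo(\R_+)\to\ltwo(\R)$ given by $(\mathcal V\varphi)(x):=\e^{x/2}\varphi(\e^x)$, which turns dilations into translations, since $\mathcal V U^+_\tau\mathcal V^*$ acts as $g\mapsto g(\,\cdot\,+\tau)$, and thus conjugates $A_+$ into $P:=-i\frac\d{\d x}$ on $\ltwo(\R)$; in particular $\mathcal V\vartheta(A_+)\mathcal V^*=\vartheta(P)$. Tensoring $\mathcal V$ with the identity on $\HS$ (on the target) and on $\H^s_t$ (on the source) and using the definition \eqref{defdeN} of $N$, a short computation shows that the factors $\e^{\pm x/2}$ arising from source and target cancel, so that $(\mathcal V\otimes1_\HS)\;\!N\;\!(\mathcal V\otimes1_{\H^s_t})^*$ equals the multiplication operator $M_m$ associated with the function $m(x):=\F_0(\e^x)$, $x\in\R$. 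Since these conjugations are unitary, it is enough to prove that $\{\vartheta(P)\otimes1_\HS\}M_m-M_m\{\vartheta(P)\otimes1_{\H^s_t}\}$ belongs to $\K\big(\ltwo(\R;\H^s_t),\ltwo(\R;\HS)\big)$.

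Next I would record the properties of the symbol $m$. By Lemma \ref{lem2} (which applies precisely for $s>-1/2$ and $t>1$), the map $\R_+\ni\lambda\mapsto\F_0(\lambda)\in\K(\H^s_t,\HS)$ is continuous, admits a limit $a$ as $\lambda\searrow0$, and vanishes as $\lambda\to\infty$; hence $m$ is continuous from $\R$ to $\K(\H^s_t,\HS)$ with $m(-\infty)=a$ and $m(+\infty)=0$. Choosing a smooth $\rho:\R\to[0,1]$ with $\rho(-\infty)=1$ and $\rho(+\infty)=0$, I would decompose $m=\rho\,a+m_0$, where $m_0:=m-\rho\,a$ vanishes at both ends and hence lies in the supremum-norm closure of $C_{\rm c}(\R)\odot\K(\H^s_t,\HS)$. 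The commutator then splits into the contributions of $\rho\,a$ and of $m_0$, to be treated separately.

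For the $\rho\,a$-part, writing $Q$ for the multiplication by $x$, one has $M_{\rho a}=\rho(Q)\otimes a$, so the commutator equals $[\vartheta(P),\rho(Q)]\otimes a$, which is compact because the scalar commutator $[\vartheta(P),\rho(Q)]$ is compact on $\ltwo(\R)$ and $a$ is compact. For the $m_0$-part, I would approximate $m_0$ in supremum norm by finite sums $\sum_jf_j\otimes a_j$ with $f_j\in C_{\rm c}(\R)$ and $a_j\in\K(\H^s_t,\HS)$; the associated commutators $\sum_j[\vartheta(P),f_j(Q)]\otimes a_j$ are compact for the same reason, and since $m_0\mapsto M_{m_0}$ is continuous from the supremum norm to the operator norm, the $m_0$-contribution is a norm-limit of compact operators, hence compact. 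Adding the two contributions gives the claim. The hard part is the scalar statement used twice, namely that $[\vartheta(P),g(Q)]$ is compact on $\ltwo(\R)$ for every $g\in C(\R)$ admitting limits at $\pm\infty$: this is the classical compactness of commutators in the algebra generated by functions of $P$ and of $Q$ with symbols in $C([-\infty,+\infty])$, which one reduces---by density of smooth symbols with compactly supported derivative and the bound $\|[\vartheta(P),g(Q)]\|\le2\|\vartheta\|_\infty\|g\|_\infty$---to the compactness of products $\phi(P)\psi(Q)$ with $\phi,\psi$ continuous and vanishing at infinity.
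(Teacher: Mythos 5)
Your proof is correct and follows essentially the same route as the paper's, whose proof is given by reference to \cite[Lemma~2.7]{RT13}: there as here, Lemma \ref{lem2} is what turns $N$ into multiplication by a $\K(\H^s_t,\HS)$-valued symbol continuous up to the endpoints $\lambda=0$ and $\lambda=\infty$, the symbol is then approximated uniformly by finite sums of scalar functions tensored with fixed compact operators, and the statement reduces to the classical (Cordes-type) compactness of commutators of functions of the dilation generator with functions of $L$ admitting limits at the endpoints. Your Mellin-type conjugation $\mathcal V$ is simply the standard device for identifying that scalar fact with the compactness of $[\vartheta(P),g(Q)]$ on $\ltwo(\R)$, so the two arguments coincide in substance.
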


\begin{proof}[Proof of Theorem \ref{Java}]
Theorem \ref{BigMama}, Lemma \ref{Lemma_compact} and the identity
$\F_0R(A)\!\;\F_0^*=\vartheta(A_+)\otimes 1_{\HS}$ imply that
\begin{align*}
W_--1
=-2\pi i\;\!\F_0^*N\big\{\vartheta(A_+)\otimes1_{\H_{\sigma/2}}\big\}B\;\!\F_0
&=-2\pi i\;\!\F_0^*\big\{\vartheta(A_+)\otimes 1_\HS\big\}NB\;\!\F_0+K\\
&=R(A)\;\!\F_0^*(-2\pi iNB)\;\!\F_0+K,
\end{align*}
with $K\in\K(\H)$. Comparing $-2\pi iNB$ with the usual expression for the scattering
matrix $S(\lambda)$ (see for example \cite[Eq.~(6.2)]{Kur73}), one observes that
$-2\pi iNB=\int_{\R_+}^\oplus\d\lambda\,\big(S(\lambda)-1\big)$. Since $\F_0$ defines
the spectral representation of $H_0$, one deduces that $W_--1=R(A)(S-1)+K$. The
formula for $W_+-1$ follows then from the relation $W_+=W_-\;\!S^*$.
\end{proof}

\section*{Acknowledgements}

R.T.d.A. thanks the University of de Lyon 1 for its kind hospitality in September and
November 2012. The authors are also grateful for the hospitality provided by the
Institut Mittag-Leffler (Djursholm, Sweden) in December 2012.


\def\polhk#1{\setbox0=\hbox{#1}{\ooalign{\hidewidth
\lower1.5ex\hbox{`}\hidewidth\crcr\unhbox0}}}
\def\polhk#1{\setbox0=\hbox{#1}{\ooalign{\hidewidth
\lower1.5ex\hbox{`}\hidewidth\crcr\unhbox0}}} \def\cprime{$'$}
\def\cprime{$'$}


\end{document}